\newcommand{\beq}{\begin{equation}}
\newcommand{\eeq}{\end{equation}}
\newcommand{\bqa}{\begin{eqnarray}}
\newcommand{\eqa}{\end{eqnarray}}
\definecolor{green}{rgb}{0.00,0.50,0.00}
\newtheorem{theorem}{Theorem}
\newtheorem{corollary}[theorem]{Corollary}
\newtheorem{definition}[theorem]{Definition}
\newtheorem{lemma}[theorem]{Lemma}
\newtheorem{proposition}[theorem]{Proposition}
\newtheorem{remark}[theorem]{Remark}
\newenvironment{proof}[1][Proof]{\noindent\textbf{#1.} }{\ \rule{0.5em}{0.5em}}
\begin{document}

\title{Law of Large Numbers for Random Quantum Dynamical Semigroups}

\author{John E.~Gough} \email{jug@aber.ac.uk}
   \affiliation{Aberystwyth University, SY23 3BZ, Wales, United Kingdom}

\author{Yurii N. Orlov} \email{orlmath@keldysh.ru}
   \affiliation{Keldysh Institute of Applied Mathematics RAS}

	\author{Vsevolod Zh. Sakbaev} \email{fumi2003@mail.ru}
   \affiliation{Moscow Institute of Physics and Technology}
	
	\author{Oleg G. Smolyanov} 
   \affiliation{Lomonosov Moscow State University, Moscow Institute of Physics and Technology}
\date{\today}

\begin{abstract}
We present a Law of Large Numbers principle for uniformly continuous random quantum dynamical semigroups. Random iterates of independent copies of these semigroups are shown to be Chernoff equivalent to the quantum dynamical semigroup by the average generator.
\end{abstract}

\maketitle
 
Dedicated to the memory of our dear colleague Oleg Georgeivich Smolyanov (1938-2021).

\bigskip

\section{Introduction}

The aim of this paper is show that if $(\mathcal{L}_n )_n$ is an i.i.d. sequence of random (Lindblad) generators for quantum dynamical semigroups, each with mean $\overline{\mathcal{L}}$, then we may have a Law of Large Numbers principle
\begin{eqnarray}
e^{\frac{t}{n} \mathcal{L}_n} \circ \cdots \circ e^{\frac{t}{n} \mathcal{L}_1} 
\stackrel{\mathrm{LLN}}{\longrightarrow}
e^{t \overline{\mathcal{L}}}
\qquad \qquad (n \uparrow \infty)
 .
\end{eqnarray}

This type of principle has been established in a recent series of papers \cite{OSS14}-\cite{GOSS21}, for compositions of general random semigroups. The iterations may, in fact, be independent and identically distributed or at least just asymptotically so. An essential use is made of the Chernoff Theorem to handle the asymptotic convergence, and of Chebshev's inequality to control convergence. In the present paper, we extend this approach to quantum open systems, where the semigroups are uniformly continuous quantum dynamical semigroups and, in particular, the generators are of Lindblad form.

We will establish Chernoff equivalence of random iterates of independent copies of the semigroups with the semigroup generated by the mean Lindblad generator. The Chebshev inequality is established using an appropriate operator-algebra theoretic notion of the variance of random quantum dynamical maps.
We mention in connection with this physical situations where random Lindblad generators have recently appeared in the Physics literature, \cite{Ch,M}. 

Here, and in the following, we write $\mathbb{R}_+$ for the semi-axis $ [0,\infty )$ and we fix a Banach space $X$. We will be interested in 
one-parameter families of morphisms, $\phi = \{\phi_t (\cdot ) : t \in \mathbb{R}_+ \}$ on $X$, with $\phi_0 $ being the identity operator $id_X$ acting on $X$. We will sometimes write $\phi_t (x) $ as $\phi ( t ,x)$ for $t \ge 0$ and $x \in X$.The collection of all strongly continuous families of this type will be denoted by 
\begin{eqnarray}
Y = C_{\mathrm{str.}} ( \mathbb{R}_+, B(X)).
\end{eqnarray}
In particular, we will be interested in the subset $Y_0 $ of $C_0$-semigroups: these are the contraction semigroups in $Y$. With a standard abuse of notation, we will write $e^{t \mathcal{L}}$ for the semigroup with generator $\mathcal{L}$, even if the domain of the generator is only norm-dense in $X$.

\subsection{Chernoff Tangency}

One of the most important issues in the theory of dynamical systems is the relationship between dynamical maps and a possible (instantaneous) generator. In particular, we may say that two dynamical maps are $\phi, \psi \in Y$ instantaneously equivalent (say at time $t=0$) if we have $\phi_0=\psi_0$ and that their infinitesimal generators sufficiently coincide at $t=0$. For this purpose, we will use a notion of equivalence based on the well-known Chernoff Theorem.

\begin{theorem}[Chernoff]
Let $\phi : \mathbb{R}_+ \mapsto  B(X)$ with $ \| \phi_ t \| =1$,  $\phi _0 =  I   $, and possessing a strong derivative at $t=0$ on a dense domain $\mathcal{D} \subset X$
with $\mathcal{L} = \text{closure} \, \{\phi_0^\prime \vert_{\mathcal{D}} \}$ the generator of a contraction $C_0$-semigroup $\xi \in Y_0$. Then the sequence $\{\phi^{(n)} \}$ determined by $\phi^{(n)}_t = \phi_{t/n}^n$ converges strongly to $\xi_ t$ uniformly for  $t  $ in compact subsets of  $\mathbb {R}_+$  , that is, for every $x \in X$,
\begin{eqnarray}
\lim_{n\rightarrow \infty }\sup_{t\in \left[ 0,T\right] }\left\| \phi
_{t/n}^{n}\left( x\right) -\xi_t \left( x\right) \right\| =0.
\label{eq:Chernoff}
\end{eqnarray}
\end{theorem}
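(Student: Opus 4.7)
The plan is to follow the standard Chernoff strategy of a two-step comparison: first replace the iterate $\phi_{t/n}^n$ by the exponential $e^{n(\phi_{t/n}-I)}$ of the bounded approximate generator $n(\phi_{t/n}-I)$, and then show that this exponential converges to $\xi_t = e^{t\mathcal{L}}$.

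The first step rests on a contraction estimate: for any $A \in B(X)$ with $\|A\|\leq 1$, any $n\in\mathbb{N}$ and $x\in X$,
\begin{equation}
\|A^n x - e^{n(A-I)} x\| \leq \sqrt{n}\,\|(A-I) x\|.
\end{equation}
This follows from the Poisson expansion $e^{n(A-I)} = e^{-n} \sum_{k\geq 0} \frac{n^k}{k!} A^k$, the telescoping inequality $\|(A^k - A^n)x\| \leq |k-n|\,\|(A-I)x\|$ valid for contractions, and Cauchy--Schwarz against the Poisson distribution of mean and variance $n$. Applying this with $A = \phi_{t/n}$ (a contraction by assumption), strong differentiability on $\mathcal{D}$ gives $\|(\phi_{t/n} - I)x\| \leq (t/n)(\|\mathcal{L}x\| + \varepsilon_n)$ with $\varepsilon_n \to 0$, so that
\begin{equation}
\sup_{t\in[0,T]} \|\phi_{t/n}^n x - e^{n(\phi_{t/n}-I)} x\| \leq \frac{T}{\sqrt{n}}\bl \|\mathcal{L}x\| + \varepsilon_n \br \to 0
\end{equation}
for every $x \in \mathcal{D}$.

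The second step is to show $e^{n(\phi_{t/n}-I)} x \to \xi_t x$ uniformly on $[0,T]$. Introduce the auxiliary parameter $u>0$ and the bounded operator $G_u = (\phi_u - I)/u$, and observe that $r \mapsto e^{r G_u}$ is a contraction semigroup (since $\phi_u$ is a contraction, via the Poisson-series estimate again). Strong differentiability of $\phi$ at $0$ gives $G_u y \to \mathcal{L}y$ as $u\downarrow 0$ for each $y\in\mathcal{D}$, and since $\mathcal{D}$ is a core for $\mathcal{L}$ a standard resolvent argument upgrades this to strong resolvent convergence. The Trotter--Kato approximation theorem then yields
\begin{equation}
\sup_{r\in[0,T]} \|e^{r G_u} x - e^{r \mathcal{L}} x\| \to 0 \qquad (u\downarrow 0),\quad x\in X.
\end{equation}
Since $e^{n(\phi_{t/n}-I)} = e^{t G_{t/n}}$ and $t/n \leq T/n \to 0$, specialising to $r=t$, $u=t/n$ gives uniform convergence in $t\in[0,T]$. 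Combined with the first step and the contractivity of both $\phi_{t/n}^n$ and $\xi_t$ (to pass from $\mathcal{D}$ to all of $X$ by density), this yields \eqref{eq:Chernoff}.

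The main subtlety, and the place to be careful, is the second step: the operator $e^{n(\phi_{t/n}-I)}$ depends on $t$ through \emph{both} the exponent and the generator, so it is not literally the evaluation of a single approximating semigroup. The resolution adopted above is to rewrite it as $e^{t G_{t/n}}$, apply Trotter--Kato along the ``slow'' parameter $u=t/n$ with the ``fast'' variable $r$ left free, and then use that the conclusion of Trotter--Kato is uniform in $r$ on compacts, so that evaluating at $r=t$, $u=t/n$ transforms uniform-in-$r$ convergence into uniform-in-$t$ convergence as $n\uparrow\infty$.
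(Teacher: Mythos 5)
Your argument is correct; it is the standard proof of Chernoff's theorem (the $\sqrt{n}$ contraction lemma via the Poisson expansion, followed by Trotter--Kato convergence of the bounded approximate generators $G_u=(\phi_u-I)/u$ on the core $\mathcal{D}$). The paper itself states this theorem without proof, as a classical cited result, so there is nothing to compare against; the one genuinely delicate point you identify --- that $e^{n(\phi_{t/n}-I)}=e^{tG_{t/n}}$ depends on $t$ through both the exponent and the generator --- is handled properly by your use of the uniform-in-$r$ conclusion of Trotter--Kato along the slow parameter $u\downarrow 0$.
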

 
Here, $\phi_\tau^n$ means the $n$-fold composition of $\phi_\tau$. Using the observation that $\xi_t \equiv \xi _{t/n}^{n}$ for semigroups, we adopt the following definition of instantaneous equivalence (at $t=0$).

\begin{definition}
Let $\left( \phi _{t}\right) _{t\geq 0}$ and $(\psi _{t})_{t\geq 0}$ be elements of $Y = C_{\mathrm{str.}}  ( \mathbb{R}_+, B (X))$ then
we say that they are Chernoff equivalent (written $\phi \sim \psi$) if \cite{OSS14,OSS16}
\begin{eqnarray}
\lim_{n\rightarrow \infty }\sup_{t\in \left[ 0,T\right] }\left\| \phi
_{t/n}^{n}\left( x\right) -\psi _{t/n}^{n}\left( x\right) \right\| =0,
\end{eqnarray}
for all $x\in X$ and $T>0$.
\end{definition}

\begin{remark}
\begin{enumerate}
\item Chernoff equivalence gives an equivalence relation on $Y$.
\item The statement of the Chernoff Theorem may be rephrased with (\ref{eq:Chernoff}) replaced by the relation $\phi \sim ( e^{t\mathcal{L}} )_{t \ge 0}$.
\item Crucially, each equivalance class may have at most one element that is a semigroup.
\item If $\phi \sim \psi$, then both possess a strong derivative at $t=0$ which coincide on the essential domain of the generator of a semigroup in $Y$.
\end{enumerate}
\end{remark}

\begin{theorem}[\cite{Neklyudov08,OSS14}]
Let $\phi \in C_{\mathrm{str.}} (\mathbb{R}_+ , B( \mathfrak{H} ))$ with $\| \phi_t \|_{B(\mathfrak{H})} \leq e^{a t}$ for some real $a$, then the sequence $\{\phi^{(n)} \}$ determined by $\phi^{(n)}_t = \phi_{t/n}^n$ converges uniformly on compacts in the strong operator topology to a $C_0$-semigroup.
\end{theorem}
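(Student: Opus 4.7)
The strategy is a Yosida--Trotter--Kato resolvent convergence argument followed by a Chernoff comparison bridging $\phi^{(n)}$ and the approximating semigroups.

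\emph{Step 1 (Approximate generators).} Set $A_n := n(\phi_{1/n}-I)$, a bounded operator generating the uniformly continuous semigroup $T^{(n)}_t := e^{t A_n}$. The bound $\|\phi_{1/n}\| \le e^{a/n}$ yields
\[
\|T^{(n)}_t\| \le \exp\bigl\{tn(e^{a/n}-1)\bigr\} \longrightarrow e^{at},
\]
so $(T^{(n)})_n$ is equi-exponentially bounded on compact $t$-intervals.

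\emph{Step 2 (Resolvent convergence and limit semigroup).} For $\lambda>a$ and $n$ large, the Neumann expansion
\[
R_n(\lambda) = (\lambda - A_n)^{-1} = \frac{1}{\lambda+n}\sum_{k=0}^{\infty}\left(\frac{n}{\lambda+n}\right)^{k}\phi_{k/n}
\]
is a step-$1/n$ Riemann approximation to $R(\lambda)x := \int_0^\infty e^{-\lambda t}\phi_t x\, dt$, and strong continuity of $\phi$ together with the exponential majorant gives $R_n(\lambda)x \to R(\lambda)x$ strongly. The resolvent identity passes to the limit, so $R(\lambda)$ is a pseudo-resolvent. The substitution $s = \lambda t$ recasts $\lambda R(\lambda)x = \int_0^\infty e^{-s}\phi_{s/\lambda}x\, ds$; dominated convergence then gives $\lambda R(\lambda)x \to x$ as $\lambda \to \infty$, so $R(\lambda)$ has dense range and is the resolvent of a closed densely-defined operator $\mathcal{L}$. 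By Hille--Yosida (with the uniform bound from Step 1), $\mathcal{L}$ generates a $C_0$-semigroup $U_t := e^{t\mathcal{L}}$; the Trotter--Kato theorem upgrades the resolvent convergence to $T^{(n)}_t x \to U_t x$ strongly, uniformly on compact $t$-intervals.

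\emph{Step 3 (Chernoff comparison and main obstacle).} To pass from $T^{(n)}_t$ to $\phi^{(n)}_t$, I would apply Chernoff's inequality $\|V^n x - e^{n(V-I)}x\| \le \sqrt{n}\,\|(V-I)x\|$ with $V = \phi_{t/n}$, comparing $\phi^{(n)}_t$ to $e^{n(\phi_{t/n}-I)}$, and then bridge $e^{n(\phi_{t/n}-I)}$ to $T^{(n)}_t = e^{tn(\phi_{1/n}-I)}$ via norm continuity of the exponential and the strong resolvent convergence of Step 2. The uniform bound $\|\phi^{(n)}_t\|\le e^{at}$ together with the dense range of $R(\lambda)$ extends the estimate to all of $\mathfrak{H}$ by an $\varepsilon/3$ argument. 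The delicate point is precisely this last comparison: the Yosida scaling natural to $\phi^{(n)}_t$ is $n(\phi_{t/n}-I)$, not the $A_n = n(\phi_{1/n}-I)$ driving $T^{(n)}_t$, and no a priori strong derivative $\phi'_0$ is available. The reconciliation must therefore be routed through the limit semigroup $U_t$ constructed in Step 2 rather than at the level of the individual approximate generators.
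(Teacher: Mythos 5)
The paper gives no proof of this theorem --- it is imported from \cite{Neklyudov08,OSS14} --- so your attempt must stand on its own, and it does not. The fatal step is the Neumann expansion in Step 2. With $A_n=n(\phi_{1/n}-I)$ one has
\[
(\lambda-A_n)^{-1}=\frac{1}{\lambda+n}\sum_{k=0}^{\infty}\Big(\frac{n}{\lambda+n}\Big)^{k}\,(\phi_{1/n})^{k},
\]
where $(\phi_{1/n})^{k}$ is the $k$-fold composition of the single operator $\phi_{1/n}$. You have silently replaced $(\phi_{1/n})^{k}$ by $\phi_{k/n}$; these coincide only when $\phi$ is already a semigroup, which is precisely what is not assumed. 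Hence $R_n(\lambda)$ is not a Riemann approximation to $\int_0^\infty e^{-\lambda t}\phi_t x\,dt$: its summands $(\phi_{1/n})^{k}$ are, after reparametrizing $k/n\to t$, exactly the Chernoff iterates whose convergence is the content of the theorem, so the resolvent-convergence step is circular. (A further symptom: the Laplace transform of a general strongly continuous family need not satisfy the resolvent identity, so it cannot be the strong limit of the genuine resolvents $R_n(\lambda)$ in general.) Everything downstream --- pseudo-resolvent, Hille--Yosida, Trotter--Kato --- is therefore unsupported.

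Step 3 you concede is incomplete, and the obstruction you identify is real and not removable under the stated hypotheses. Chernoff's inequality gives $\|\phi_{t/n}^n x - e^{n(\phi_{t/n}-I)}x\|\le \sqrt{n}\,\|(\phi_{t/n}-I)x\|$, which tends to zero only if $\|(\phi_s-I)x\|=o(\sqrt{s})$ as $s\downarrow 0$, e.g.\ when $\phi$ has a strong derivative at $0$ on a dense subspace --- a hypothesis absent from the statement. Indeed, as literally stated the theorem fails: on $\mathfrak{H}=\mathbb{C}$ take $\phi_t=e^{i\sqrt{t}}$, which is continuous with $\phi_0=1$ and $\|\phi_t\|=1\le e^{0\cdot t}$, yet $\phi_{t/n}^{n}=e^{i\sqrt{nt}}$ has no limit for $t>0$. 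Any correct proof, as in \cite{Neklyudov08,OSS14}, must invoke a Chernoff-tangency (differentiability at $t=0$ on an essential domain) condition; once that is added, your Step 3 machinery with the correct time-dependent Yosida scaling $n(\phi_{t/n}-I)$ can be made to work, but that is a different theorem from the one you set out to prove.
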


%%%%%%%%%%%%%%%%%%%%%%%%%%%%%%%%%%%%%%%%%%%%%%%%%%%%%%%%%%%%%%%%%%%%%%%%%%%%%%%%%%%%%%%%%%%%%%%%%%%%%%%%%

\subsection{Random Evolutions}
A \emph{random dynamics} $\Phi$ on the Banach space $X$ can be understood as a mapping $\Phi : \Omega \mapsto Y : \omega \mapsto \{  ( \phi_{t, \omega} )_{t \ge 0} \}$, where $\Omega $ is some sample space. Here we fix a probability space $(\Omega , \mathscr{A}, \mathbb{P})$. We require that $\Phi$ be measurable and this necessitates that we endow $Y$ it a suitable choice of Borel subsets. To this end, we note that $Y$ can be be given the topology $\tau$ generated by the family of semi-norms $\{ \rho_x,T : x\in X, T>0\}$ where $\rho_{x,T} (\phi) = \sup_{t \in [0,T]} \| \phi_t (x) \|$. We then take $\mathscr{B}$ to be the Borel subsets generated by the topology $\tau$. By a similar procedure, we may associate Borel subsets to any subset of $Y$. 

For instance, we denote by $\mathscr{B}_0$ the Borel subsets of the set of semigroups $Y_0$. A \emph{random semigroup} \cite{OSS16,Sakbaev18} may then be defined as a mapping $\Phi$ from $\Omega$ to $Y_0$ which is $(\mathscr{A}, \mathscr{B}_0)$-measurable. In more detail, a random semigroup is a mapping $\Phi : \mathbb{R}_+  \times X \times \Omega  \mapsto X $ such that $\Phi (t, x , \omega) $ is strongly continuous in $t$, linear in $x$ and measurable with respect to the $\sigma$-algebra $\mathcal{A}$, and we have the identity
\begin{eqnarray}
\Phi (t , \Phi (s, x , \omega ) , \omega ) = \Phi (t+s, x , \omega). 
\end{eqnarray}
We will write $\Phi (t, \omega )$ for the morphism $\Phi (t , \cdot , \omega )$ on $X$, in which case the identity reads as $\Phi (t , \omega ) \circ \Phi (s , \omega ) = \Phi (t+s , \omega )$.
We may write
\begin{eqnarray}
\Phi (t , \cdot ,\omega ) = e^{t \mathcal{L}( \cdot ,\omega )},
\end{eqnarray}
where $\mathcal{L} ( \cdot , \omega )$ is the generator.

\begin{definition} 
Let $\Phi$ be a random dynamics on the Banach space $X$ with underlying probability space $(\Omega , \mathscr{A}, \mathbb{P})$. The mean dynamics $\overline{\Phi}$ is understood as the Pettis integral
\begin{eqnarray}
\langle \sigma , \overline{\Phi}_t (x) \rangle = \int_\Omega \langle \sigma , \phi_{t, \omega } (x) \rangle \, \mathbb{P} [d \omega ],
\label{eq:Pettis}
\end{eqnarray}
for all $x\in X$ and all $\sigma$ in the pre-dual $X_\ast$, with $\langle \sigma, x \rangle$ is the duality pairing.
\end{definition}

One would hope that a random semigroup possesses a well-defined mean in $Y$ under suitable conditions. However, we should not expect the mean itself to form a semigroup. In fact, we will next recall results in this direction for Hilbert space dynamics.

%%%%%%%%%%%%%%%%%%%%%%%%%%%%%%%%%%%%%%%%%%%%%%%%%%%%%%%%%%%%%%%%%%%%%%%%%%%%%%%%%%%%%%%%%%%%%%%%
%%%%%%%%%%%%%%%%%%%%%%%%%%%%%%%%%%%%%%%%%%%%%%%%%%%%%%%%%%%%%%%%%%%%%%%%%%%%%%%%%%%%%%%%%%%%%%%%%%%%%%%%%%%%%%%%%%%%%%%%%%%%%%
\subsection{CP Maps}
In quantum theory, a central role is played by completely positive morphisms on the C*-algebra $\mathcal{B}$ of operators. To recall, given a map $\phi :\mathcal{B}\mapsto \mathcal{B}$ and a positive integer $n$ we define its
extension $\phi \otimes id_{n}$ to the algebra $\mathcal{B}\otimes M_{n}$,
where $M_{n}$ is the algebra of $n\times n$ matrices, as
\begin{eqnarray*}
\phi \otimes id_{n}\left( \left[ 
\begin{array}{ccc}
x_{11} & \cdots  & x_{1n} \\ 
\vdots  & \ddots  & \vdots  \\ 
x_{n1} & \cdots  & x_{nn}
\end{array}
\right] \right) = \left[ 
\begin{array}{ccc}
\phi (x_{11}) & \cdots  & \phi (x_{1n}) \\ 
\vdots  & \ddots  & \vdots  \\ 
\phi (x_{n1}) & \cdots  & \phi (x_{nn})
\end{array}
\right] .
\end{eqnarray*}
We say that $\phi $ is $n$-positive if $\phi \otimes id_{n}$ is positive. 
For instance, 2-positivity enforces the St\o rmer inequality $ \phi \left( x^{\ast }x\right) \geq \phi \left( x\right) ^{\ast }\phi \left(
x\right) $.

The map is completely positive if $\phi \otimes id_{n}$ is positive for all $%
n$. It is possible to give a unitary dilation for a CP map, that is we may find a second Banach space $\mathcal{A}$ such that
(for every $x \in \mathcal{B}$ and $\rho \in \mathcal{B}_\star$)
\begin{eqnarray}
\langle \rho, \phi (x) \rangle = 
\langle \rho \otimes \sigma , U^\ast ( x \otimes  I_{\mathcal{A}})U \rangle ,
\end{eqnarray} 
for $\sigma $ a positive normalized element of $\mathcal{A}$ and $U$ a unitary on $\mathcal{B} \otimes \mathcal{A} $. We will typically be interpreting the Banach spaces as subspaces of operators over Hilbert spaces, in which case $\mathcal{B} \otimes \mathcal{A} $ may be understood concretely as a tensor product. The physical interpretation is that $ \mathcal{A} $ is the environment, $\sigma $ is its state, and $U$ the evolution coupling the system $\mathcal{B}$ to $\mathcal{A}$.

We shall refer to one-parameter $C_0$-semigroups of CP maps as
quantum dynamical semigroups and denote the collection of such maps over $%
\mathcal{B}$ as $QDS\left( \mathcal{B}\right) $. 

The seminal result of
Lindblad \cite{Lindblad76} and Gorini-Kossakowski-Sudarshan \cite{GKS76} was the categorization of the
generators of uniformly continuous one-parameter semigroups of CP maps. 

\begin{theorem}[Lindblad Generators, \cite{Lindblad76,GKS76}]
\label{thm:LGKS}
The generator of a uniformly continuous QDS $\phi = (\phi_t )_{ t \ge 0}$ on $X= B (\mathfrak{h})$, where $\mathfrak{h}$ is a separable Hilbert space, takes the form
\begin{eqnarray}
\mathcal{L} (x) = \sum_k L^\ast_k x L_k + xK + K^\ast x,
\label{eq:GKLS_generator}
\end{eqnarray}
where $L_k, K \in B(X)$ and 
\begin{eqnarray}
 \sum_k L^\ast_k  L_k + K + K^\ast = 0,
\label{eq:generator_conservative}
\end{eqnarray}
where the sums in (\ref{eq:GKLS_generator}) and (\ref{eq:generator_conservative}) are understood to converge strongly.
\end{theorem}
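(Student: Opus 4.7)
The plan is to leverage the boundedness of $\mathcal{L}$ (guaranteed by uniform continuity) together with a conditional complete positivity property that $\mathcal{L}$ must inherit from the CP semigroup structure, and then to implement a Stinespring-type dilation at the level of the generator. I would proceed in three stages.

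First, uniform continuity of $\phi$ delivers a bounded generator $\mathcal{L} \in B(B(\mathfrak{h}))$ with $\phi_t = e^{t\mathcal{L}}$ converging in operator norm. Taking the QDS to be conservative, $\phi_t(I) = I$ for all $t$, gives $\mathcal{L}(I) = 0$ at once; this is precisely the constraint (\ref{eq:generator_conservative}) once the pieces of $\mathcal{L}$ have been identified.

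Second, I would extract the key structural constraint by differentiating complete positivity of $\phi_t$ at $t = 0$. For any finite tuple $a_1, \ldots, a_n \in B(\mathfrak{h})$ and $u_1, \ldots, u_n \in \mathfrak{h}$, CP yields $\sum_{i,j} \langle u_i, \phi_t(a_i^* a_j) u_j \rangle \ge 0$. Restricting to tuples satisfying $\sum_i a_i u_i = 0$, the zeroth-order term in $t$ vanishes because $\sum_{i,j} \langle u_i, a_i^* a_j u_j \rangle = \| \sum_j a_j u_j \|^2 = 0$. Dividing by $t > 0$ and letting $t \downarrow 0$ therefore produces the conditional complete positivity inequality
\begin{equation}
\sum_{i,j} \langle u_i, \mathcal{L}(a_i^* a_j) u_j \rangle \ge 0 \quad \text{whenever} \quad \sum_i a_i u_i = 0 .
\end{equation}

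Third, I would build the GKLS data from this inequality by a GNS/Stinespring-type construction. On the algebraic tensor product $\mathcal{V} = B(\mathfrak{h}) \odot \mathfrak{h}$ define the sesquilinear form $q(a \otimes u, b \otimes v) = \langle u, \mathcal{L}(a^* b) v \rangle$ and the linear evaluation map $\iota : \mathcal{V} \to \mathfrak{h}$, $\iota(a \otimes u) = au$. Conditional complete positivity says that $q$ is positive semidefinite on $\ker \iota$; quotienting $\ker \iota$ by its $q$-null subspace and completing produces a Hilbert space $\mathfrak{k}$, to be interpreted as the environment. Selecting an orthonormal basis of $\mathfrak{k}$ extracts a family of bounded operators $L_k$ on $\mathfrak{h}$, and the part of $q$ living transversally to $\ker \iota$ is derivation-like and can be absorbed into a single $K \in B(\mathfrak{h})$, giving the remainder term $x \mapsto x K + K^* x$. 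Enforcing $\mathcal{L}(I) = 0$ then reproduces exactly (\ref{eq:generator_conservative}).

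The main obstacle is this last stage in the infinite-dimensional setting. One must (i) establish strong, not merely weak, convergence of the sums $\sum_k L_k^* x L_k$ and $\sum_k L_k^* L_k$ for every $x \in B(\mathfrak{h})$, despite $\mathfrak{k}$ possibly being non-separable as produced by the raw construction, and (ii) pin $K$ down only up to the known gauge $L_k \mapsto L_k + c_k I$, $K \mapsto K - \sum_k \bar{c}_k L_k - \tfrac{1}{2} \sum_k |c_k|^2 I$, which can be removed by a standard normalisation of the $L_k$ against a reference state on the environment.
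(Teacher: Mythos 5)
The paper does not prove this statement: it is quoted as the classical Gorini--Kossakowski--Sudarshan--Lindblad theorem and deferred entirely to the cited references. Your sketch reconstructs exactly the standard argument those references use --- norm-continuity gives a bounded generator, differentiating complete positivity at $t=0$ along tuples with $\sum_i a_i u_i = 0$ gives conditional complete positivity, and a GNS/Stinespring-type construction on $B(\mathfrak{h}) \odot \mathfrak{h}$ modulo the kernel of the evaluation map produces the operators $L_k$ and $K$, with $\mathcal{L}(I)=0$ yielding the constraint (\ref{eq:generator_conservative}). So in approach you are aligned with the source of the result rather than with anything in this paper.

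The one point you flag but do not close is genuinely the crux in infinite dimensions, so it is worth being precise about it: extracting a \emph{strongly convergent} Kraus-type sum $\sum_k L_k^\ast x L_k$ with countably many $L_k$ is not a consequence of uniform continuity and complete positivity alone --- a general CP map on $B(\mathfrak{h})$ need not be normal and need not admit such a representation. The classical proofs handle this by first invoking the Christensen--Evans decomposition $\mathcal{L}(x) = \Psi(x) + xK + K^\ast x$ with $\Psi$ completely positive \emph{and normal} (normality of the semigroup, i.e.\ $\sigma$-weak continuity of each $\phi_t$, is implicitly part of the definition of a QDS here), and only then applying the Kraus representation theorem for normal CP maps on $B(\mathfrak{h})$ with $\mathfrak{h}$ separable to get a countable, strongly convergent sum. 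Your raw quotient construction does not by itself deliver normality or separability of the environment space $\mathfrak{k}$; that step has to be imported, not merely normalised away. The gauge-fixing remark in your item (ii) is correct and standard.
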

%%%%%%%%%%%%%%%%%%%%%%%%%%%%%%%%%%%%%%%%%%%%%%%%%%%%%%%%%%%

%%%%%%%%%%%%%%%%%%%%%%%%%%%%%%%%%%%%%%%%%%%%%%%%%%%%%%%%%%%%%%%%%%%%%%%%%%%%%%%%%%%%%%%%%%%%%%%%%%%%%%%%

\section{Random CP Dynamics}
We now fix a separable Hilbert space $\mathfrak{h}$ and set $X=B(\mathfrak{h})$ and consider one-parameter families of morphisms $(\phi_t)_{t \ge 0}$ belonging to $Y = C_{\mathrm{str.}} ( \mathbb{R}_+, B(X))$ with continuity understood as in the uniform topology. In particular, we will restrict our attentions to the subspaces $Y^{\mathrm{CP}}$ of CP families on $\mathcal{B} = B(\mathfrak{h})$ and $Y^{\mathrm{CP}}_0$ of CP $C_0$-semigroups. With these, we associate the Borel subsets $\mathscr{B}^{\mathrm{CP}}$ and $\mathscr{B}^{\mathrm{CP}}_0$, respectively.

\begin{definition}
A random quantum dynamical semigroup (random QDS) $\Phi =\left( \Phi _{t}\right) _{t\geq 0}$ is a measurable function taking values in $Y^{\mathrm{CP}}_0$. We take the underlying probability space $\left( \Omega ,\mathscr{A},\mathbb{P}\right) $ and so $\left( \Phi _{t}\right) _{t\geq 0}:\omega
\mapsto (\phi _{t,\omega })_{t\geq 0}$ is $\mathscr{A}$-measurable map taking values in $Y^{\mathrm{CP}}_0$. 
\end{definition}

Its average $\overline{\Phi}$ is then defined as above, though generally it will not form a semigroup. From Theorem \ref{thm:LGKS}, we have that, for each $\omega \in \Omega$, $(\phi_{\omega , t})_{t \ge 0}$ will have a generator $\mathcal{L}_\omega$ of the form (\ref{eq:GKLS_generator}). We refer to $:\omega \to \mathcal{L}_\omega$ as the random Lindblad generator associated with $\Phi$.

\begin{definition}
We say that a random QDS $\Phi$ is densely strongly equicontinuous if there exists a dense linear subspace $\mathcal{D} \subset X$ such that for every $\varepsilon >0$ there exists a $\delta > 0$ such that $\| \phi_{\omega, t_1} (x) - \phi_{\omega , t_2 } (x) \| < \varepsilon $, whenever $x \in \mathcal{D}, t_1, t_2 \ge 0$ and $ | t_1 -t_2 | < \delta$.
\end{definition}

Our first result is a straightforward specialization of an argument presented in \cite{OSS14,ES15}.

\begin{proposition}
Let $\Phi$ be a random uniformly continuous QDS on $\mathcal{B} = B(\mathfrak{h} )$. If $\Phi$ is uniformly bounded (that is, there exists an $M >0$ such that $\| \phi_{t, \omega }  \|_{ B(\mathfrak{H})} <M$ for all $t \ge 0$ and $\omega \in \Omega$) and densely strongly equicontinuous then $\overline{ \Phi}$ exists in $Y^{\mathrm{CP}}_0$.
\end{proposition}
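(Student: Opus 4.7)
The plan is to verify the three properties needed for $\overline{\Phi}$ to be a strongly continuous CP family on $B(\mathfrak{h})$: for each $t \ge 0$, (i) $\overline{\Phi}_t$ is a well-defined bounded linear map on $X=B(\mathfrak{h})$; (ii) $\overline{\Phi}_t$ is completely positive; and (iii) $t \mapsto \overline{\Phi}_t$ is strongly continuous. The initial condition $\overline{\Phi}_0 = id_X$ is immediate from $\phi_{\omega,0}=id_X$ for all $\omega$, and as the paper notes the averaged family is not expected to be a semigroup, so I read ``$Y^{\mathrm{CP}}_0$'' in the statement as the strongly continuous CP families on $X$.

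For (i), I would fix $t$, $x \in X$, and $\sigma \in X_\ast = \mathcal{T}(\mathfrak{h})$, and establish $\mathbb{P}$-integrability of the integrand in (\ref{eq:Pettis}). Measurability of $\omega \mapsto \langle \sigma, \phi_{t,\omega}(x)\rangle$ follows from the measurability of $\Phi$ into $(Y^{\mathrm{CP}}_0,\mathscr{B}^{\mathrm{CP}}_0)$ together with the fact that the evaluation $\phi \mapsto \langle \sigma, \phi_t(x)\rangle$ is continuous in the seminorm topology $\tau$ generating $\mathscr{B}^{\mathrm{CP}}_0$ (since $|\langle \sigma, \phi_t(x)\rangle| \le \|\sigma\|_{X_\ast}\rho_{x,t}(\phi)$); integrability follows from the uniform bound $|\langle \sigma,\phi_{t,\omega}(x)\rangle| \le M \|\sigma\|_{X_\ast} \|x\|$. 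Because $X$ is the dual of $X_\ast$, the bounded linear functional $\sigma \mapsto \int \langle \sigma, \phi_{t,\omega}(x)\rangle \mathbb{P}[d\omega]$ on $X_\ast$ is represented by a unique $\overline{\Phi}_t(x) \in X$ with $\|\overline{\Phi}_t(x)\| \le M\|x\|$, and linearity in $x$ is inherited pointwise. For (ii), for each $n$, each $[x_{ij}]\ge 0$ in $B(\mathfrak{h})\otimes M_n$, and each $\xi = (\xi_i)_{i=1}^n \in \mathfrak{h}^{\oplus n}$, one has $\sum_{i,j}\langle \xi_i, \phi_{t,\omega}(x_{ij})\xi_j\rangle \ge 0$ almost surely; the entrywise Pettis definition of $\overline{\Phi}_t \otimes id_n$ on $[x_{ij}]$ gives $[\overline{\Phi}_t(x_{ij})]$, and integrating the scalar inequality against $\mathbb{P}$ yields $\sum_{i,j}\langle \xi_i, \overline{\Phi}_t(x_{ij})\xi_j\rangle \ge 0$, i.e.\ $n$-positivity for every $n$.

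For (iii), the dense strong equicontinuity supplies $\delta(\varepsilon) > 0$ such that $\|\phi_{\omega,t_1}(x) - \phi_{\omega,t_2}(x)\| < \varepsilon$ uniformly in $\omega$ whenever $x \in \mathcal{D}$ and $|t_1 - t_2| < \delta$. Using the dual-norm formula for the operator norm,
\begin{eqnarray*}
\| \overline{\Phi}_{t_1}(x) - \overline{\Phi}_{t_2}(x) \|
&=& \sup_{\|\sigma\|_{X_\ast}\le 1} \left| \int_\Omega \langle \sigma, \phi_{\omega,t_1}(x)-\phi_{\omega,t_2}(x)\rangle \mathbb{P}[d\omega]\right| \\
&\le& \int_\Omega \|\phi_{\omega,t_1}(x)-\phi_{\omega,t_2}(x)\| \mathbb{P}[d\omega] \; < \; \varepsilon
\end{eqnarray*}
for $x \in \mathcal{D}$. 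A standard $\varepsilon/3$ argument, using the uniform bound $\|\overline{\Phi}_t\|\le M$ from (i) to approximate $x \in X$ by elements of $\mathcal{D}$, extends norm continuity (and hence strong continuity in the sense required for $Y^{\mathrm{CP}}$) to all of $X$. I expect the only genuinely delicate step to be the measurability/Pettis integrability check in (i); once that is in hand, (ii) and (iii) are inherited pointwise from the random trajectories through essentially dominated-convergence-style estimates, and the CP/continuity conclusions follow from preservation of positivity and of uniform modulus of continuity under $\int \cdot \, \mathbb{P}[d\omega]$.
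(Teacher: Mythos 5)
Your argument is sound, but note that the paper itself supplies no proof of this proposition: it is stated without a proof environment, prefaced only by the remark that it is ``a straightforward specialization of an argument presented in \cite{OSS14,ES15}.'' So there is no in-paper argument to compare against; what you have written is a reasonable reconstruction of the standard averaging argument those references use. Your three steps are the right ones: weak-$*$ representation of the Pettis integral in $X=B(\mathfrak{h})=(\mathcal{T}(\mathfrak{h}))^\ast$ with the uniform bound $M$ giving well-definedness, positivity of $[\overline{\Phi}_t(x_{ij})]$ inherited by integrating the scalar inequalities $\sum_{i,j}\langle \xi_i,\phi_{t,\omega}(x_{ij})\xi_j\rangle\ge 0$ against $\mathbb{P}$, and the $\varepsilon/3$ extension of the equicontinuity estimate from $\mathcal{D}$ to $X$. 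You were also right to flag the target space: taken literally, $Y^{\mathrm{CP}}_0$ is the set of CP $C_0$-semigroups, yet the paper states immediately afterwards that $\overline{\Phi}$ ``generally \ldots\ will not form a semigroup''; your reading of the conclusion as membership in $Y^{\mathrm{CP}}$ (a strongly continuous, unital CP family with $\overline{\Phi}_0=id_X$) is the only one consistent with the rest of the text, and is what your proof actually delivers.
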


The Chernoff Theorem provides sufficient conditions for the Chernoff equivalence of the average $\overline{\Phi}$ and the semigroup generated by the averaged generator $\overline{\mathcal{L}} = \mathbb{E} [ \Phi_t ^\prime ]$. The next result is adapted from \cite{OSS14,OSS16}.

\begin{proposition}
\label{prop:Lbar}
Let $\Phi$ be a random QDS with associated random generators $:\omega \mapsto \mathcal{L}_\omega$.  Further suppose that there exists an essential domain $D\subset \mathfrak{H}$ for the $\mathcal{L}_\omega$, with $\int_\Omega \| \mathcal{L}_\omega x \| \, \mathbb{P} [ d \omega ] < \infty$ for every $x\in \mathfrak{H}$. Then 
\begin{eqnarray}
\overline{\mathcal{L}} (x) = \int_\Omega \mathcal{L}_\omega (x) \, \mathbb{P} [d \omega ],
\end{eqnarray}
for all $x \in \mathcal{D}$, and defines an essentially self-adjoint operator (we denote its closure by the same symbol). Then the averaged QDS $\overline{\Phi}$ is Chernoff equivalent to the QDS with generator $\bar{\mathcal{L}} ( \cdot )$: that is, $\overline{\Phi} \sim ( e^{t \overline{\mathcal{L} }})_{t \ge 0}$.
\end{proposition}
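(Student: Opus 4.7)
The plan is to apply the Chernoff Theorem to $\overline{\Phi}$ with candidate generator given by the closure of $\overline{\mathcal{L}}\vert_{\mathcal{D}}$. Since $\phi_{0,\omega} = I$ for every $\omega$, we have $\overline{\Phi}_0 = I$, and uniform boundedness of $\overline{\Phi}_t$ is inherited from the uniform bound on $\phi_{t,\omega}$ (via the hypotheses used in the preceding proposition). The two substantive steps are therefore: (a) to show that $\overline{\Phi}_t$ admits a strong derivative at $t=0$ on $\mathcal{D}$ which equals $\overline{\mathcal{L}}$; and (b) to show that the closure of $\overline{\mathcal{L}}\vert_{\mathcal{D}}$ generates a CP $C_0$-semigroup on $X$. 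Once these are in hand, the Chernoff Theorem delivers $(\overline{\Phi}_{t/n})^n \to e^{t \overline{\mathcal{L}}}$ strongly and uniformly on compact $t$-intervals, which is precisely the Chernoff equivalence $\overline{\Phi} \sim (e^{t\overline{\mathcal{L}}})_{t \ge 0}$.

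First I would verify that the Bochner integral defining $\overline{\mathcal{L}}(x)$ makes sense on $\mathcal{D}$: measurability of $\omega \mapsto \mathcal{L}_\omega(x)$ is inherited from the measurability of $\Phi$ (differentiating $\phi_{t,\omega}(x)$ at $t=0$ realises it as a pointwise limit of measurable maps), while norm-integrability is exactly the hypothesis $\int_\Omega \|\mathcal{L}_\omega x\|\,\mathbb{P}[d\omega] < \infty$. To establish (a), for $x \in \mathcal{D}$ I would write
\begin{eqnarray*}
\frac{\overline{\Phi}_t(x) - x}{t} = \int_\Omega \frac{\phi_{t,\omega}(x) - x}{t}\,\mathbb{P}[d\omega] = \int_\Omega \frac{1}{t}\int_0^t \phi_{s,\omega}(\mathcal{L}_\omega x)\,ds\,\mathbb{P}[d\omega],
\end{eqnarray*}
using the identity $\phi_{t,\omega}(x) - x = \int_0^t \phi_{s,\omega}(\mathcal{L}_\omega x)\,ds$ which holds on the generator domain of each uniformly continuous $\phi_\omega$. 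For fixed $\omega$ the inner expression converges in norm to $\mathcal{L}_\omega(x)$ as $t \downarrow 0$ and is dominated by $M\|\mathcal{L}_\omega x\|$, with $M$ the uniform bound on $\|\phi_{t,\omega}\|$. Integrability of this dominating function then licenses passing the limit inside the outer integral by dominated convergence, yielding $\overline{\mathcal{L}}(x)$ as the strong derivative.

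For step (b), I would appeal to Theorem \ref{thm:LGKS}: each $\mathcal{L}_\omega$ has GKLS form with operators $(L_{k,\omega}, K_\omega)$, and linearity of the Bochner integral preserves this structure on averaging. The dissipative part $x \mapsto \int_\Omega \sum_k L_{k,\omega}^\ast x L_{k,\omega}\,\mathbb{P}[d\omega]$ is an average of CP maps and hence CP, while the drift $x\mapsto \int_\Omega (xK_\omega + K_\omega^\ast x)\,\mathbb{P}[d\omega]$ together with the conservativity condition (\ref{eq:generator_conservative}) for the averaged data pass through by linearity. Consequently the closure of $\overline{\mathcal{L}}\vert_{\mathcal{D}}$ is a bona fide Lindblad generator of a QDS, completing the hypotheses of the Chernoff Theorem. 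The main obstacle is step (a): justifying the interchange of the $t\downarrow 0$ limit with the Pettis integral defining $\overline{\Phi}_t$, which is precisely what the $L^1$-integrability hypothesis is crafted to enable via dominated convergence.
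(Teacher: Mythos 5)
Your proposal is correct and follows the same overall strategy as the paper (verify the hypotheses of the Chernoff Theorem for $\overline{\Phi}$ with candidate generator $\overline{\mathcal{L}}$), but the technical execution of the key step differs. The paper establishes the strong derivative of $\overline{\Phi}_t$ at $t=0$ by writing $\phi_{t,\omega} = id + t\,\mathcal{L}_\omega + r_{t,\omega}$ with the uniform second-order remainder bound $\| r_{t,\omega}\|_{\mathcal{B}} \le C t^2 e^{\lambda t}$, where $\lambda = \sup_{\omega}\|\mathcal{L}_\omega\|_{\mathcal{B}}$; integrating over $\Omega$ then gives $\overline{\Phi}_t = id + t\,\overline{\mathcal{L}} + r_t$ with a remainder that is $o(t)$, and Chernoff applies. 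This is quick but tacitly assumes the generators lie in a fixed ball of $B(\mathcal{B})$, a hypothesis not actually stated in the Proposition (it appears only later, in Theorem \ref{thm:main_LLN}). Your route via the identity $\phi_{t,\omega}(x)-x=\int_0^t \phi_{s,\omega}(\mathcal{L}_\omega x)\,ds$ and dominated convergence, with dominating function $M\|\mathcal{L}_\omega x\|$, works under the weaker integrability hypothesis $\int_\Omega \|\mathcal{L}_\omega x\|\,\mathbb{P}[d\omega]<\infty$ that the Proposition actually states, so it buys a proof matching the stated assumptions at the cost of a slightly longer argument. You also explicitly address the point the paper passes over in silence, namely that the closure of $\overline{\mathcal{L}}\vert_{\mathcal{D}}$ generates a QDS; your argument (averages of conditionally completely positive, conservative maps remain so) is the right idea, though if you wanted to exhibit an explicit GKLS representation of $\overline{\mathcal{L}}$ you would need a direct-integral treatment of the family $\{L_{k,\omega}\}$ rather than a naive termwise average. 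Neither issue is a gap in your proposal relative to the paper; if anything your version is the more complete one.
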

\begin{proof}
It suffices to verify the conditions of the Chernoff Theorem. Since the functions $( \phi_{\omega , t} )_{t \ge 0}$ are continuous for each $\omega \in \Omega$ and takes values in the cone of identity-preserving CP maps on the Banach space $\mathcal{B} = B (\mathfrak{h})$ with $\Phi_{0, \omega } = id$, it follows that the average $\overline{\Phi}_t = \int_\Omega \phi_{t, \omega } \mathbb{P} [ d \omega ]$ inherits these properties.

For each $\omega\in \Omega$, we have the inequality $\phi_{t , \omega } = id + t \, \mathcal{L}_\omega + r_{t, \omega }$ where the remainder is bounded by $\| r_{t, \omega } \|_{\mathcal{B}} \le Ct^2 \, e^{\lambda t}$ where $\lambda = \sup_{\omega \in \Omega } \| \mathcal{L}_\omega \|_{\mathcal{B}}$.
This implies that $\psi (t) = id + t\, \overline{\mathcal{L}} +r_t$ satisfies the requirements of Chernoff's Theorem provided that $\| r_t \| \le Ct \, e^{\lambda t}$. The result then follows.
\end{proof}

%%%%%%%%%%%%%%%%%%%%%%%%%%%%%%%%%%%%%%%%%%%%%%%%%%%%%%%%%%%%%%%%%%%%%%%%%%%%%%%%%%%%%%%%%%%%%%%%%%%%%%%%%%%%%%%%%%%%%%%%%%%%%%
%%%%%%%%%%%%%%%%%%%%%%%%%%%%%%%%%%%%%%%%%%%%%%%%%%%%%%%%%%%%%%%%%%%%%%%%%%%%%%%%%%%%%%%%%%%%%%%%%%%%%%%%%%%%%%%%%%%%%%%%%%%%%%
%%%%%%%%%%%%%%%%%%%%%%%%%%%%%%
%%%%%%%%%%%%%%%%%%%%%%%%%%%%%%%%%%%%%%%%%%%%%%%%%%%%%%%%%%%%%%%%%%%%%%%%%%%%%%%%%%%%%%%%%%%%%%%%%%%%%%%%%%%%%
%%%%%%%%%%%%%%%%%%%%%%%%%%%%%%%%%%%%%%%%%%%%%%%%%%%%%%%%%%%%%%%%%%%%%%%%%%%%%%%%%%%%%%%%%%%%%%%%%%%%%%%%%%%%%

%%%%%%%%%%%%%%%%%%%%%%%%%%%%%%%%%%%%%%%%%%%%%%%%%%%%%%%%%%%%%%%%%%%%%%%%%%%%%%%%%%%%%%%%%%%%%%%%%%%%%%%%
%%%%%%%%%%%%%%%%%%%%%%%%%%%%%%%%%%%%%%%%%%%%%%%%%%%%%%%%%%%%%%%%%%%%%%%%%%%%%%%%%%%%%%%%%%%%%%%%%%%%%%%%
%%%%%%%%%%%%%%%%%%%%%%%%%%%%%%%%%%%%%%%%%%%%%%%%%%%%%%%%%%%%%%%%%%%%%%%%%%%%%%%%%%%%%%%%%%%%%%%%%%%%%%%%
%%%%%%%%%%%%%%%%%%%%%%%%%%%%%%%%%%%%%%%%%%%%%%%%%%%%%%%%%%%%%%%%%%%%%%%%%%%%%%%%%%%%%%%%%%%%%%%%%%%%%%%%
%%%%%%%%%%%%%%%%%%%%%%%%%%%%%%%%%%%%%%%%%%%%%%%%%%%%%%%%%%%%%%%%%%%%%%%%%%%%%%%%%%%%%%%%%%%%%%%%%%%%%%%%
\section{Law of Large Numbers for Random Quantum Dynamics}

\subsection{Compositions of Random Operators}
Let $\mathcal{L}$ be a random variable with values in the Banach space $X = B (B (\mathfrak{h}))$ of bounded linear operators acting in the C*-algebra $B (\mathfrak{h})$, defined as a weakly measurable mapping of the probability space $(\Omega, \mathscr{A}, \mathbb{P})$ into $X$. The weak measurability of the mapping $\mathcal{L}: \Omega \mapsto X: \omega \mapsto \mathcal{L} ( \omega )$ (i.e., the measurability of the collection of numerical functions $\langle g, \mathcal{L} (\omega )\, f \rangle$ for any $f\in X, g \in X^\ast$) in the case of a finite-dimensional space X is equivalent to the measurability of a mapping into a Banach space and implies the measurability of the real-valued random variable $ \| \mathcal{L}( \omega ) \|_X$.

Suppose that random variable $\mathcal{L}$ has mean value $\overline{\mathcal{L}}\in X$ and takes values in a ball of some radius $\lambda > 0$ of a Banach space $X$. Futhermore, suppose that $\mathcal{L}$ possesses a finite third moment: $\int_\Omega \|  \mathcal{L}(\omega )\|_X^3 \, \mathbb{P} [d \omega] <\infty $.

Let $( \mathcal{L}_n)_n $ be a sequence of independent identically distributed random variables, the distribution of each of which coincides with the distribution of the random variable $\mathcal{L}$.
Let $( \Phi_n) $ be a sequence of independent semigroups with each $\Phi_n$ generated by element $\mathcal{L}_n$, then the sequence $\Psi_n$ of random operator-valued functions defined by 
\begin{eqnarray}
\Psi_n(t)= \Phi_n(\frac{t}{n})\circ ...\circ \Phi_1(\frac{t}{n}),\ t\geq 0,\ n\in {\mathbb N},
\label{eq:Psi_n}
\end{eqnarray}
converges in probability to the one-parameter semigroup $( e^{\overline{\mathcal{L}}t})_{t \ge  0}$, in the space $C_s({\mathbb R}_+,X)$.

For each $t \ge 0$, the operator $e^{\mathcal{L}(\omega ) t}$ is then a random variable taking values in a ball of radius $\exp (t \lambda)$ in the space $X$. Consequently, the function $\overline{\Phi} :\ \mathbb{R}_+ \mapsto X$ defined by 
\begin{eqnarray}
\overline{\Phi} (t)= \int_{\Omega } e^{\mathcal{L}(\omega ) t}  \, \mathbb{P}(d\omega ). 
\end{eqnarray}
According to Theorem 1 \cite{OSS16}, the function $\overline{\Phi} $ is a map $\mathbb{ R}_+\to B(B(\mathfrak{h}))$ continuous in the strong operator topology (according to the finite dimensionality of the space $X$, the map is also continuous in the topology of the norm of the space $X$).

For each $\omega \in \Omega$, the Mean Value Theorem of Lagrange \cite{BogSm}, we have that  
\begin{eqnarray}
\| \exp (\mathcal{L} (\omega ) t)-{\bf I}\|_X \leq t\lambda \exp (\lambda t), 
\label{eq:Lagrange_estimate}
\end{eqnarray}
and $\| \exp (\mathcal{L} (\omega ) t)-{\bf I} -t  \mathcal{L} (\omega ) \|_X\leq \frac{t^2\lambda ^2}{2}\exp (\lambda t)$. From this estimate we have
$\frac{d}{dt}\overline{\Phi} (t)|_{t=0} =\overline{\mathcal{L}}$.

\begin{lemma}
\label{lem:D}
We have
\begin{eqnarray}
\mathbb{E} [ \Psi_n (t) ] = \overline{\Phi}  ( \frac{t}{n} )^n,
\end{eqnarray}
and if we introduce the operator-valued variance 
\begin{eqnarray}
\mathbb{D}_{\Psi_n} (t) = 
\mathbb{E}[(\Psi_n(t)-(\overline{\Phi} (\frac{t}{n}))^n)^*(\Psi_n(t)-(\overline{\Phi} (\frac{t}{n}))^n)],
\end{eqnarray}
then for each $T>0$ there exists a $C>0$ depending on $T$ and $\lambda$ such that $\sup_{[0,T]} \| \mathbb{D}_{\Psi_n} (t) \| \le \frac{C}{n}$.
\end{lemma}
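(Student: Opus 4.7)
The plan is to dispatch the two claims separately: the mean identity will follow directly from the i.i.d.\ assumption, while the operator-variance bound will come from a telescoping recursion on $n$.

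For the mean, I will use that $\Phi_1,\ldots,\Phi_n$ are independent with $\mathbb{E}[\Phi_k(t/n)] = \overline{\Phi}(t/n)$, and successively peel off one factor of the composition $\Psi_n(t) = \Phi_n(t/n)\circ\cdots\circ\Phi_1(t/n)$ by conditioning on the filtration $\mathcal{F}_k := \sigma(\mathcal{L}_1,\ldots,\mathcal{L}_k)$. Since $\Phi_n(t/n)$ is independent of $\mathcal{F}_{n-1}$ and its mean is $\overline{\Phi}(t/n)$, Fubini/Pettis yields $\mathbb{E}[\Psi_n(t)] = \overline{\Phi}(t/n)\circ \mathbb{E}[\Psi_{n-1}(t)]$, and iterating gives $\overline{\Phi}(t/n)^n$.

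For the variance, I would write $B := \overline{\Phi}(t/n)$ and introduce the centred fluctuations $A_k := \Phi_k(t/n) - B$, which are i.i.d.\ with zero mean. Setting $R_n := \Psi_n(t) - B^n$ and splitting off the outermost factor yields the one-step recursion
\[
R_n = B\, R_{n-1} + A_n\, \Psi_{n-1}(t), \qquad R_0 = 0.
\]
Expanding $R_n^* R_n$ into four terms and conditioning on $\mathcal{F}_{n-1}$, the two cross terms vanish because $A_n$ is independent of $\mathcal{F}_{n-1}$ and mean zero. So
\[
\mathbb{D}_{\Psi_n}(t) = \mathbb{E}[R_{n-1}^*\, B^* B\, R_{n-1}] + \mathbb{E}[\Psi_{n-1}(t)^*\, V\, \Psi_{n-1}(t)], \qquad V := \mathbb{E}[A_n^* A_n].
\]
Using the operator inequality $X^* P X \le \|P\|\, X^* X$ for positive $P$, the first term is bounded in operator norm by $\|B\|^2\|\mathbb{D}_{\Psi_{n-1}}\|$, while the second is bounded by $\|V\|\cdot \|\Psi_{n-1}\|^2$.

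To close the estimate I will apply the Lagrange-type bound (\ref{eq:Lagrange_estimate}) to both $\Phi_k(t/n)$ and $\overline{\Phi}(t/n)$, obtaining $\|A_k\| \le 2(t/n)\lambda e^{\lambda t/n}$, hence $\|V\| = O(t^2/n^2)$, together with $\|B\|,\|\Psi_{n-1}\| \le e^{\lambda t}$. This reduces the problem to the scalar recurrence
\[
\|\mathbb{D}_{\Psi_n}(t)\| \le e^{2\lambda t/n}\, \|\mathbb{D}_{\Psi_{n-1}}(t)\| + \frac{C'\, t^2}{n^2},
\]
which unrolls (using $\mathbb{D}_{\Psi_0}=0$ and geometric summation) to
\[
\|\mathbb{D}_{\Psi_n}(t)\| \le n\cdot e^{2\lambda t}\cdot \frac{C' t^2}{n^2} \le \frac{C(T,\lambda)}{n}
\]
uniformly for $t\in[0,T]$, as claimed.

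The main obstacle I expect is the operator-valued bookkeeping: giving a precise meaning to the adjoint $^*$ on $X = B(B(\mathfrak{h}))$ (most naturally via a Hilbert--Schmidt inner product on $B(\mathfrak{h})$ in the finite-dimensional case invoked by the authors), justifying the positivity inequality $X^* P X \le \|P\|\, X^* X$ in that setting, and verifying that the conditional-expectation argument that annihilates the cross terms really goes through for operator-valued (rather than merely scalar) random variables. Everything else is bookkeeping around the Lagrange remainder estimate.
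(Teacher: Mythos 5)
Your proposal is correct in substance but organizes the argument quite differently from the paper. The paper expands each factor as $\Phi_k = \Phi_k^{(0)}+\Phi_k^{(1)}$ (mean plus deviation), writes $\Psi_n(t)$ and then $\mathbb{D}_{\Psi_n}(t)$ as a full sum over multi-indices $\boldsymbol{\alpha}\in\{0,1\}^n$, kills the off-diagonal blocks $\mathbb{D}^{\boldsymbol{\beta},\boldsymbol{\alpha}}$ ($\boldsymbol{\beta}\neq\boldsymbol{\alpha}$) by independence and mean-zero-ness, and then bounds the surviving diagonal contributions combinatorially: grouping by the number $m$ of deviation factors gives $\binom{n}{m}(\lambda t/n)^m$-type estimates, summed via the binomial theorem and Taylor's theorem applied to $(1+\lambda t/n)^n-1-\lambda t$. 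Your telescoping recursion $R_k = B\,R_{k-1}+A_k S_{k-1}$ with conditioning on $\mathcal{F}_{k-1}$ reaches the same $O(1/n)$ bound while avoiding all of that combinatorial bookkeeping; the essential cancellation (cross terms vanish because $A_k$ is independent of the past and mean zero) is exactly the same mechanism as the paper's vanishing of off-diagonal blocks, just exploited one step at a time. What your route costs is a genuine reliance on the C*-type inequality $X^*PX\le\|P\|X^*X$ to close the recursion in the quantity $\|\mathbb{E}[R_k^*R_k]\|$ — you cannot replace this by naive norm bounds, since the cross term in $\|BR_{k-1}+A_kS_{k-1}\|^2$ does not vanish for a general Banach norm and would degrade the rate to $O(n^{-1/2})$; but the paper's very definition of $\mathbb{D}_{\Psi_n}$ already presupposes an adjoint and positivity on $B(B(\mathfrak{h}))$ (concretely available via the Hilbert--Schmidt structure in the finite-dimensional examples), so you are not assuming more than the statement itself does, and you are right to flag it. One slip to repair in a careful write-up: the recursion as written involves $\Psi_{n-1}(t)$, but $\Psi_{n-1}(t)$ uses the time step $t/(n-1)$, so $\Phi_n(t/n)\circ\Psi_{n-1}(t)\neq\Psi_n(t)$; you should instead recurse on the partial products $S_k=\Phi_k(t/n)\circ\cdots\circ\Phi_1(t/n)$ and their centred versions $R_k=S_k-\overline{\Phi}(t/n)^k$ for $k=0,\dots,n$ at the fixed step $t/n$, which is clearly what you intend and makes the unrolled estimate $\sum_{k=1}^n e^{2\lambda t(n-k)/n}\,O(t^2/n^2)=O(1/n)$ go through uniformly on $[0,T]$.
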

\begin{proof}
For each $k$, we write $\Phi_k (\tau ) = \Phi^{(0)}_k (\tau )+ \Phi^{(1)}_k (\tau )$ where $\Phi^{(0)}_k (\tau ) = \mathbb{E} [\Phi_k (\tau ) ] = \overline{\Phi} (\tau )$. It follows that the random variables $\{ \Phi^{(1)}_k (\tau ) \}_k$ (the deviations from the mean) are mean zero independent variables. Furthermore, by virtue of (\ref
{eq:Lagrange_estimate}), we have almost surely, for all $t \ge 0$, the bound
\begin{eqnarray}
\| \Phi^{(1)}_k (\tau ) \|_{\mathcal{B}}  \le t \lambda \, e^{t \lambda }.
\label{eq:bound}
\end{eqnarray}
For each $n\in \mathbb  N$, we have the identity $(t  \ge 0 )$
\begin{eqnarray}
\Psi_n(t) = \bigg( \Phi_n^{(0)}(\frac{t}{n})+\Phi^{(1)}_n(\frac{t}{n}) \bigg) \circ ...\circ \bigg( \Phi^{(0)}_1(\frac{t}{n})+
\Phi_1^{(1)} (\frac{t}{n})\bigg) = \sum_{\boldsymbol{\alpha}} \Phi^{(\alpha_n )}_n (\frac{t}{n} ) \circ \cdots \circ \Phi^{(\alpha_1)}_1 (\frac{t}{n} )
\end{eqnarray}
where we have a sum over $\boldsymbol{\alpha} = (\alpha_n , \cdots, \alpha_1 ) \in \{0,1\}^n$. As the deviations $\{ \Phi^{(1)}_k (\tau ) \}_k$are mean zero independent we immediately obtain
\begin{eqnarray}
\mathbb{E} [ \Psi_n (t) ] = \Phi^{(0)}_n (\frac{t}{n} ) \circ \cdots \circ \Phi^{(0)}_1 (\frac{t}{n} )= \overline{\Phi}  ( \frac{t}{n} )^n .
\end{eqnarray}

The operator-valued variance may be written as
\begin{eqnarray}
\mathbb{D}_{\Psi_n} (t) = 
\mathbb{E}[\Psi_n(t)^* \circ \Psi_n(t)] - (\overline{\Phi} (\frac{t}{n})^\ast)^n \circ (\overline{\Phi} (\frac{t}{n}))^n
\end{eqnarray}
and we may similarly expand this as $\mathbb{D}_{\Psi_n} (t) = \sum_{\boldsymbol{\beta}, \boldsymbol{\alpha} } \mathbb{D}^{\boldsymbol{\beta}, \boldsymbol{\alpha} } (t) $ 
where
\begin{eqnarray}
 \mathbb{D}^{\boldsymbol{\beta}, \boldsymbol{\alpha} } (t) =\mathbb{E} \bigg[ 
\Phi^{(\beta_1 )}_1 (\frac{t}{n} )^\ast \circ \cdots \circ \Phi^{(\beta_n)}_n (\frac{t}{n} )^\ast \circ \Phi^{(\alpha_n )}_n (\frac{t}{n} ) \circ \cdots \circ \Phi^{(\alpha_1)}_1 (\frac{t}{n} )  \bigg] ,
\end{eqnarray}
with the exception of $\mathbb{D}^{\boldsymbol{0}, \boldsymbol{0} } (t)$ which will vanish identically.

We also have that $ \mathbb{D}^{\boldsymbol{\beta}, \boldsymbol{\alpha} } (t)$ will vanish identically if $\boldsymbol{\beta}\neq \boldsymbol{\alpha} $ since we otherwise will end up averaging a single $\Phi^{(1)}_k$ map which will yield a zero. Therefore we have $\mathbb{D}_{\Psi_n} (t) = \sum_{\boldsymbol{\alpha} } \mathbb{D}^{\boldsymbol{\alpha}, \boldsymbol{\alpha} } (t) $ and we may furthermore collect terms together as $\mathbb{D}_{\Psi_n} (t) = \sum_{m=0}^{n} \mathbb{D}_m^{n} (t)$ where $\mathbb{D}_m^{n} (t)$ is the contribution from the sum of those $\mathbb{D}^{\boldsymbol{\alpha}, \boldsymbol{\alpha} } (t) $ where there are exactly $m$ of the $n $ labels $\alpha_1, \cdots , \alpha_n$, take on the value 1. We have the bound $\| \mathbb{D}_m^{n} (t) \| \le \binom{n}{m}   \, e^{\lambda t} ( \lambda \frac{t}{n} )^m$ leading to the estimate
\begin{eqnarray}
\| \mathbb{D}_{\Psi_n} (t) \| 
\le \| \mathbb{D}^n_1 (t) \|  +
e^{\lambda t} \bigg[  (1+\frac{t}{n}\lambda )^n-1-t\lambda \bigg]
\end{eqnarray}
According to Taylor's Theorem, there is a number $s \in (0,1)$ such that 
\begin{eqnarray}
\| \mathbb{D}_{\Psi_n} (t) \| 
\le \| \mathbb{D}^n_1 (t) \|  + \frac{t \lambda^2}{2n} e^{\lambda t}(1 + s \frac{t}{n} )^n  
\le \| \mathbb{D}^n_1 (t) \|  + \frac{t \lambda^2}{2n} e^{2 \lambda t}   .
\end{eqnarray}
Since $\| \mathbb{E} [ \Phi_t ] \| _ {\mathcal{B}} \le e^{\lambda t}$ and by virtue of the bound (\ref{eq:bound}), we have $ \| \mathbb{D}^n_1 (t) \| \le n^2 e^{2 \lambda t} (\frac{t}{n} )^2$.

Therefore, for each $T >0$, there exists a $C= C(T, \lambda ) >0$ such that $\sup_{t \in [0,T]}  \| \mathbb{D}^{\Psi_n}  (t) \| < \frac{C}{n}$ for all $n \in \mathbb{N}$. 
\end{proof}

\bigskip

We may now use the Chebyshev inequality for operator valued measures, see Lemma 1 in \cite{OSS19}, to complete 
the result.

%%%%%%%%%%%%%%%%%%%%%%%%%%%%%%%%%%%%%%%%%%%%%%%%%%%%%%%%%%%%%%%%%%%%%%%%%%%%%%%%%%%%%%%%%%%%%%%%%%%%%%%%%%%%%%%%%%%%%%%%
%%%%%%%%%%%%

\begin{theorem}
\label{thm:main_LLN}
Let $\phi$ be a random quantum dynamical semigroup on $\mathcal{B} = B(\mathfrak{h})$ whose generators take values in ball of radius $\lambda < \infty$ in the Banach space $B( \mathcal{B})$. If $( \Phi _n ) _n $ is an independent sequence of random semigroups, each with the same distribution as $\Phi$, then the sequence $(\Psi_n )_n$ given by (\ref{eq:Psi_n}) satisfies the Law of Large Numbers
\begin{eqnarray}
\lim_{n \to \infty} \mathbb{P} [ \sup_{t \in [0,T]} \| \Psi_n (t) X - \overline{\Phi}_n (t) X \|_{\mathcal{B}} >\varepsilon ]=0,
\end{eqnarray}
for all $ X \in \mathcal {B}, T \ge 0$ and $\varepsilon >0$.
\end{theorem}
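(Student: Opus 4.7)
The strategy is to apply an operator-valued Chebyshev inequality, using the $1/n$ variance estimate in Lemma~\ref{lem:D}, and then to convert the resulting pointwise-in-$t$ bound into a statement about the supremum on $[0,T]$ by an equicontinuity argument.  Below I interpret $\overline{\Phi}_{n}(t)$ as $\overline{\Phi}(t/n)^{n}=\mathbb{E}[\Psi_{n}(t)]$, which is the natural object produced by Lemma~\ref{lem:D}; the identification with $e^{t\overline{\mathcal{L}}}$ is then automatic from Proposition~\ref{prop:Lbar}.

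First I would apply the Chebyshev inequality for operator-valued measures (Lemma~1 of \cite{OSS19}) pointwise in $t$.  Together with the bound $\sup_{t\in[0,T]}\|\mathbb{D}_{\Psi_{n}}(t)\|\le C/n$ from Lemma~\ref{lem:D}, this yields, for each fixed $t\in[0,T]$,
\begin{equation}
\mathbb{P}\Big[\|\Psi_{n}(t)X-\overline{\Phi}(t/n)^{n}X\|_{\mathcal{B}}>\eta\Big]\le \frac{\|X\|_{\mathcal{B}}^{2}\,\|\mathbb{D}_{\Psi_{n}}(t)\|}{\eta^{2}}\le \frac{C\|X\|_{\mathcal{B}}^{2}}{n\eta^{2}}.
\end{equation}

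The second step is to promote this to a uniform-in-$t$ estimate.  Using (\ref{eq:Lagrange_estimate}) in each slot and $\|\Phi_{k}(\tau)\|\le e^{\lambda\tau}$ for the $n-1$ unchanged factors, a telescoping argument gives, almost surely,
\begin{equation}
\|\Psi_{n}(t)-\Psi_{n}(s)\|_{B(\mathcal{B})}\le \lambda\,e^{\lambda T}|t-s|,\qquad s,t\in[0,T],
\end{equation}
with the same Lipschitz bound for the deterministic map $t\mapsto\overline{\Phi}(t/n)^{n}$.  Choose a partition $0=t_{0}<t_{1}<\cdots<t_{N_{\delta}}=T$ with mesh $\delta$ satisfying $2\lambda e^{\lambda T}\|X\|_{\mathcal{B}}\,\delta<\varepsilon/2$; then the event $\{\sup_{t\in[0,T]}\|\Psi_{n}(t)X-\overline{\Phi}(t/n)^{n}X\|>\varepsilon\}$ forces some grid point $t_{k}$ to satisfy $\|\Psi_{n}(t_{k})X-\overline{\Phi}(t_{k}/n)^{n}X\|>\varepsilon/2$.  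The union bound with the pointwise Chebyshev estimate yields
\begin{equation}
\mathbb{P}\Big[\sup_{t\in[0,T]}\|\Psi_{n}(t)X-\overline{\Phi}(t/n)^{n}X\|_{\mathcal{B}}>\varepsilon\Big]\le \frac{4CN_{\delta}\|X\|_{\mathcal{B}}^{2}}{n\varepsilon^{2}}\longrightarrow 0,
\end{equation}
since $N_{\delta}$ depends only on $\varepsilon,T,\lambda,\|X\|_{\mathcal{B}}$.

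To finish, if one prefers to state the limit as $e^{t\overline{\mathcal{L}}}X$, the triangle inequality together with Proposition~\ref{prop:Lbar} (which via Chernoff gives $\overline{\Phi}(t/n)^{n}X\to e^{t\overline{\mathcal{L}}}X$ uniformly on $[0,T]$ in deterministic fashion) absorbs the remaining discrepancy at no probabilistic cost.  The hard part is step two: a Chebyshev bound is intrinsically pointwise in $t$, and converting it into control of a supremum over a continuum is only tractable here because the Lipschitz constant $\lambda e^{\lambda T}$ is uniform in both $n$ and $\omega$.  This uniformity rests squarely on the hypothesis that the random Lindblad generators are almost surely bounded in $B(\mathcal{B})$ by a common constant $\lambda$; relaxing this would demand either a stopping-time truncation or higher-moment control on $\sup_{t\in[0,T]}\|\Psi_{n}(t)\|$.
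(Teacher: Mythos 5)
Your proposal follows essentially the same route as the paper, which proves the theorem in a single line by combining the variance bound of Lemma~\ref{lem:D} with the Chebyshev inequality for operator-valued measures (Lemma~1 of \cite{OSS19}). Your additional step---converting the pointwise-in-$t$ Chebyshev estimate into control of the supremum over $[0,T]$ via the uniform Lipschitz bound $\|\Psi_n(t)-\Psi_n(s)\|\le \lambda e^{\lambda T}|t-s|$ and a union bound over a finite partition---is correct and in fact supplies a detail the paper delegates entirely to the cited reference.
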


An immediate corollary is the following.

\begin{corollary}
Under the conditions of Theorem \ref{thm:main_LLN} is that the sequence $( \Psi_n )_n$ converges in probability to the quantum dynamical semigroup with average generator $\overline{\lambda} = \mathbb{E} [ \Phi_0^\prime ]$.
\end{corollary}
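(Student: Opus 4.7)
The plan is to combine Lemma \ref{lem:D} with the operator-valued Chebyshev inequality from \cite{OSS19} and a uniform-in-$n$ equicontinuity argument to pass from pointwise to uniform convergence on $[0,T]$. Here $\overline{\Phi}_n(t)$ is to be read as $\overline{\Phi}(t/n)^n$, which by Lemma \ref{lem:D} is the (deterministic) mean of $\Psi_n(t)$.

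First, at each fixed $t$, apply the operator-valued Chebyshev inequality to the centered random variable $\Psi_n(t) - \overline{\Phi}(t/n)^n$ with operator-valued variance $\mathbb{D}_{\Psi_n}(t)$. This yields
\begin{eqnarray*}
\mathbb{P}\bigl[\|\Psi_n(t)X - \overline{\Phi}(t/n)^n X\|_{\mathcal{B}} > \varepsilon\bigr] \le \frac{\|X\|^2\,\|\mathbb{D}_{\Psi_n}(t)\|}{\varepsilon^2} \le \frac{C\,\|X\|^2}{n\,\varepsilon^2},
\end{eqnarray*}
using $\sup_{t\in[0,T]}\|\mathbb{D}_{\Psi_n}(t)\| \le C/n$ from Lemma \ref{lem:D}. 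This already delivers pointwise-in-$t$ convergence in probability.

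Second, to upgrade to the supremum over $[0,T]$, I would establish joint equicontinuity in $t$, uniform in $n$, both almost surely for $\Psi_n(\cdot)$ and in operator norm for $\overline{\Phi}(\cdot/n)^n$. Applying (\ref{eq:Lagrange_estimate}) factor-by-factor together with the telescoping identity
\begin{eqnarray*}
\Psi_n(t) - \Psi_n(s) = \sum_{k=1}^{n} \Phi_n(\tfrac{t}{n}) \circ \cdots \circ \bigl[\Phi_k(\tfrac{t}{n}) - \Phi_k(\tfrac{s}{n})\bigr] \circ \cdots \circ \Phi_1(\tfrac{s}{n})
\end{eqnarray*}
gives an almost-sure bound of the form $\|\Psi_n(t) - \Psi_n(s)\|_{B(\mathcal{B})} \le \lambda|t-s|\,e^{2\lambda T}$, with the same bound for $\overline{\Phi}(\cdot/n)^n$ after taking expectations. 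Picking a partition $0 = t_0 < t_1 < \cdots < t_N = T$ whose mesh is small enough that the equicontinuity contribution between adjacent grid points is at most $\varepsilon/2$, a union bound yields
\begin{eqnarray*}
\mathbb{P}\bigl[\sup_{t\in[0,T]}\|\Psi_n(t)X - \overline{\Phi}(t/n)^n X\| > \varepsilon\bigr] \le \sum_{k=0}^{N} \mathbb{P}\bigl[\|\Psi_n(t_k)X - \overline{\Phi}(t_k/n)^n X\| > \varepsilon/2\bigr] \le \frac{4(N+1)\,C\,\|X\|^2}{n\,\varepsilon^2},
\end{eqnarray*}
which tends to $0$ as $n\to\infty$ since $N$ depends only on $T,\lambda,\varepsilon$.

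The main obstacle is precisely this upgrade from pointwise to uniform convergence, since the Chebyshev bound is intrinsically pointwise in $t$. The enabling technical estimate is the uniform-in-$n$ Lipschitz bound on $\Psi_n(\cdot)$: a naive application of (\ref{eq:Lagrange_estimate}) to each of the $n$ composition factors would produce a constant growing with $n$, and it is only the telescoping combined with the uniform bound $\|\Phi_k(t/n)\|\le e^{\lambda t/n}$ on the remaining factors that yields an $n$-independent Lipschitz constant. Once pointwise convergence and uniform-in-$n$ equicontinuity are both in hand, the finite-grid union bound completes the proof.
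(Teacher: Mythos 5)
Your argument establishes (in fact re-establishes) the content of Theorem \ref{thm:main_LLN}: that $\sup_{t\in[0,T]}\|\Psi_n(t)X-\overline{\Phi}(t/n)^n X\|_{\mathcal{B}}\to 0$ in probability. But that statement is already available as a hypothesis --- the corollary is posed ``under the conditions of Theorem \ref{thm:main_LLN}'' --- and the corollary's actual content is the identification of the limit: that $\Psi_n$ converges in probability to the \emph{semigroup} $(e^{t\overline{\mathcal{L}}})_{t\ge 0}$ generated by the averaged generator, not merely to the deterministic (non-semigroup) family $\overline{\Phi}(t/n)^n$. Nowhere in your proof do you address the deterministic convergence $\sup_{t\in[0,T]}\|\overline{\Phi}(t/n)^n X - e^{t\overline{\mathcal{L}}}X\|\to 0$. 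That is the step the paper supplies via Proposition \ref{prop:Lbar}: since $\overline{\Phi}_0=id$, $\|\overline{\Phi}(t)\|\le e^{\lambda t}$, and $\frac{d}{dt}\overline{\Phi}(t)\vert_{t=0}=\overline{\mathcal{L}}$ (the latter following from the Lagrange estimate accompanying (\ref{eq:Lagrange_estimate})), Chernoff's Theorem gives $\overline{\Phi}\sim(e^{t\overline{\mathcal{L}}})_{t\ge0}$, i.e.\ exactly the missing uniform-on-compacts convergence of the iterates $\overline{\Phi}(t/n)^n$. Once that is in hand, the corollary follows by the triangle inequality from the conclusion of Theorem \ref{thm:main_LLN}; without it, you have not connected $\Psi_n$ to the averaged generator at all.

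The Chebyshev-plus-equicontinuity machinery you develop is essentially the paper's route to Theorem \ref{thm:main_LLN} itself (Lemma \ref{lem:D} combined with the operator-valued Chebyshev inequality of \cite{OSS19}), and your telescoping bound producing an $n$-independent Lipschitz constant is a reasonable way to make the pointwise-to-uniform upgrade explicit. But for the corollary as stated, the one essential idea is the Chernoff step identifying $\lim_n\overline{\Phi}(t/n)^n$ with $e^{t\overline{\mathcal{L}}}$, and that is absent from your proposal.
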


%%%%%%%%%%%%%%%%%%%%%%%%%%%%%%%%%%%%%%%%%%%%%%%%%%%%%%%%%%%%%%%%%%%%%%%%%%%%%%%%%%%%%%%%%%%%%%%%%%%%%%%%%%%%%%%%%%%%%%%%
\section{Explicit Examples}

In the special case where $\mathfrak{h} = \mathbb{C}^N$ so that $\mathcal{B}$ is $M_{N}$, we have that the generator
takes the form
\begin{eqnarray}
\mathcal{L} \left( x\right) =\frac{1}{2}\sum_{n,m=1}^{N^{2}-1}k_{nm}\left( [a_{n}^{\ast
},x]a_{m}+a_{n}^{\ast }[x,a_{m}]\right) -i\left[ x,H\right] 
\label{eq:Kossakowski}
\end{eqnarray}
where $\left\{ a_{n}:n=1,\cdots ,N^{2}\right\} $, with $a_{N^{2}}=I$, is a
set of operators that are Hilbert-Schmidt orthonormal (that is, $tr\left\{
a_{n}^\ast a_{m}\right\} =\delta _{nm}$) forms a basis of $M_{N}$, $k=\left(
k_{nm}\right) $ is positive semidefinite and $H=H^{\ast }$ is Hermitean. The
matrix $k$ (called the Kossakowski matrix) and the operator $H$
(Hamiltonian) determine the generator.

In the case where $\mathfrak{h} \equiv \mathbb{C}^N$, then we have the straightforward result that if the generator $%
\mathcal{L}_{\omega }$ is determined from the Kossakowski matrix $k\left( \omega
\right) $ and Hamiltonian $H\left( \omega \right) $. The mean generator from Proposition \ref{prop:Lbar} is then $\bar{\mathcal{L} }$ will have
matrix $\bar{k}$ with entries $\int_{\Omega }k_{nm}\left( \omega \right)
\mathbb{P}[d\omega ]$ and $\bar{H}=\int_{\Omega }H\left( \omega \right) \mathbb{P}\left[
d\omega \right] $. Note that $\bar{k}$ will again be positive semidefinite and $H$ hermitean.

%%%%%%%%%%%%%%%%%%%%%%%%%%%%%%%%%%%%%%%%%%%%%%%%%%


\begin{thebibliography}{99} 

\bibitem{OSS14} Y.N. Orlov, V.Zh. Sakbaev, and O.G. Smolyanov,\lq\lq Feynman formulas as a method of averaging random
Hamiltonians,” Proc. Steklov Inst. Math. 285, 222–232 (2014) [transl. from Tr. Mat. Inst. Steklova 285, 232–243
(2014)].

\bibitem{ES15}
L.S. Efremova and V.Zh. Sakbaev, \lq\lq Notion of blowup of the solution set of differential equations and averaging of random semigroups,'' Theoret. and Math. Phys. 185 1582–98 (2015).

\bibitem{OSS16} Y.N. Orlov, V.Zh. Sakbaev, and O.G. Smolyanov, “Unbounded random operators and Feynman formulae,'' Izv. Math. 80 (6), 1131–1158 (2016) [transl. from Izv. Ross. Akad. Nauk, Ser. Mat. 80 (6), 141–172 (2016)]

\bibitem{OSS19} Y.N. Orlov, V.Zh. Sakbaev and O.G. Smolyanov, "Feynman Formulas and the Law of Large Numbers
for Random One-Parameter Semigroups,'' Proceedings of the Steklov Institute of Mathematics, Vol. 306, pp. 210–226 (2021).

\bibitem{Sakbaev18} V.Zh. Sakbaev, "Averaging of Random Flows of Linear and Nonlinear Maps,'' IOP Conf. Series: Journal of Physics: Conf. Series 990, 012012, (2018).

\bibitem{GOSS21} J.E. Gough, Orlov Yu. N., Sakbaev V. Zh. and Smolyanov O. G. (2021), \lq\lq Randomized Quantum Hamiltonian Systems\rq\rq ,  Dokl. RAN. Math. Inf. Proc. upr., 498:1,  31–36 (2021).

\bibitem{Ch} W. Tarnovski, I. Yusipov, T. Laptyeva, S. Denosov, D. Chruscinski, K. Zycxkowski.
\lq\lq Random generators of Markovian evolution: quantum-classical transition by superdecoherence,'' Phys. Rev. E \textbf{104}, 034118 (2021).

\bibitem{M} S. Bonaccorci, F. Cottini, D. Mugnolo. \lq\lq Random evolution equation: well-posedness, asymptotics and application to graphs,'' Appl Math Optim (2021). https://doi.org/10.1007/s00245-020-09732-w

\bibitem{Neklyudov08}
A.Y. Neklyudov. \lq\lq Inversion of Chernoff's theorem,'' Mathematical Notes 83 530–8, (2008).

\bibitem{Lindblad76} G. Lindblad,
``On the generators of quantum dynamical semigroups,''
Commun. Math. Phys. \textbf{48}, 19 (1976).

\bibitem{GKS76} V. Gorini, A. Kossakowski, E.C.G. Sudarshan,  ``Completely positive dynamical semigroups of N-level systems'',  J. Math. Phys. 17 (5): 821 (1976)


\bibitem{BogSm} V.I. Bogachev and O.G. Smolyanov, \textit{Real and Functional Analysis} (Springer 2020)


\end{thebibliography}
\end{document}